\newtheorem{definition}{Definition}
\newtheorem{remark}{Remark}
\newtheorem{thm}{Theorem}
\newtheorem{lem}{Lemma}
\newtheorem{ass}{Assumption}
\newtheorem{prop}{Proposition}
\newsavebox{\measure@tikzpicture}
	\def\tikz@width{#1}%
\newcommand{\hinf}{$\mathcal{H}_\infty$}
\newcommand{\N}{\mathbb{N}}
\newcommand{\R}{\mathbb{R}}
\newcommand{\signpow}[1]{\left\lceil {#1} \right\rfloor}
\newcommand{\qhnorm}[4]{\left\| {#1}\right\|_{{#2},{#3}}^{#4}}
\newcommand{\gradient}[2]{\frac{\partial {#1}\left({#2}\right)}{\partial {#2}}}
\newcommand{\lp}[1][p]{\mathcal{L}_{{#1}}}
\newcommand{\lph}[1][p]{\mathcal{L}_{{#1}\mathrm{h}}}
\newcommand{\lpnorm}[2][p]{\| #2 \|_{\lp[#1]}}
\newcommand{\lphnorm}[3][p]{\| #2 \|_{#3,\lph[#1]}}
\title{
	Minimizing the Homogeneous $\lp[2]$-Gain of\\Homogeneous Differentiators
}
\author{Benjamin Vo\ss$^{1}$, Jaime A. Moreno$^{2}$ and Johann Reger$^{1\ast}$
	\thanks{$^{1}$Control Engineering Group, Technische
		Universit\"at Ilmenau, P.O. Box 10 05 65, D-98684, Ilmenau, Germany}%
		\thanks{$^{2}$Eléctrica y Computatión, Instituto de Ingeniería, Universidad National Autónoma de México. Ciudad de México 04510, Mexico}
	\thanks{$^{\ast}$Corresponding author: {\tt\small johann.reger@tu-ilmenau.de}}
}
\begin{document}

	\maketitle
	\thispagestyle{empty}
	\pagestyle{empty}

	\begin{abstract}
	The differentiation of noisy signals using the family of homogeneous differentiators is considered.	It includes the high-gain (linear) as well as robust exact (discontinuous) differentiator. 
	To characterize the effect of noise and disturbance on the differentiation estimation error, the generalized, homogeneous $\bm{\lp[2]}$-gain is utilized. Analog to the classical $\bm{{\lp}}$-gain, it is not defined for the discontinuous case w.r.t. disturbances acting on the last channel. Thus, only continuous differentiators are addressed. 
	The gain is estimated using a differential dissipation inequality, where a scaled Lyapunov function acts as storage function for the homogeneous $\bm{\lp[2]}$ supply rate.
	The fixed differentiator gains are scaled with a gain-scaling parameter similar to the high-gain differentiator. 
	This paper shows the existence of an optimal scaling which (locally) minimizes the homogeneous $\bm{\lp[2]}$-gain estimate and provides a procedure to obtain it. 
	Differentiators of dimension two are considered and the results are illustrated via numerical evaluation and a simulation example.
		
	\end{abstract}


	\section{Introduction}

	Differentiation of noisy signals remains a challenging task, where mainly two approaches are used to tackle the problem: the linear high-gain observer acting as a differentiator on the one hand~\cite{Khalil2002,Atassi2000,Vasiljevic2008} and the discontinuous robust exact differentiator on the other hand~\cite{Levant1998,Levant2001,Levant2003}. Both these approaches are special cases of the generalized family of homogeneous differentiators of degree $d$ (see e.g. \cite{CruzZavala2016,Yang2004}), where $d = 0$ corresponds to the linear case and $d = -1$ to the discontinuous case.
	
	It is of natural interest to characterize the influence of noise and disturbances on the differentiation estimation error, e.g. for appropriate parameter tuning or robust stability of the closed loop. 
	Vasiljevic and Khalil ~\cite{Vasiljevic2008} make use of the $\lp[\infty]$-gain and derive an optimal choice of the high-gain parameter. This minimizes an upper bound of the $\lp[\infty]$-gain, given the knowledge of the respective noise and disturbance bounds.
	For the discontinuous, second-order case, i.e. the super-twisting differentiator, Seeber~\cite{Seeber2023} just recently published non-conservative error bounds in a similar setting. Furthermore, a trade-off between speed of convergence and differentiation error are discussed to ease adequate tuning. 
	
	The goal of this article is to find an optimal gain-scaling~$L$ (similar to the high-gain setup) for the homogeneous differentiator~\cite{CruzZavala2016}. For this purpose, we need to consider weighted homogeneous systems with inputs and outputs, i.e. homogeneous input-output mappings. When introducing weights not only to the state, input and output but also to the time variable, applying a dilated input (also w.r.t. time) can lead to a dilated output. This yields the concept of homogeneous input-output maps recently introduced by Zhang~\cite{Zhang2022}. 
	It turns out that the classical $\lp$-gain~\cite{Schaft2017} is only a local property for these systems, i.e. it is not dilation-invariant (apart from special cases). Thus, Zhang proposes a generalized, homogeneous  $\lp$-gain~\cite{Zhang2022}  ($\lph$-gain) with homogeneity degree zero and therefore global validity. Being a generalization, the classical $\lp$-gain is recovered for linear systems (i.e. $d = 0$).
	
	Therefore, we consider optimality with respect to the $\lph$-gain of the homogeneous continuous differentiation error dynamics of degree $d\in(-1,1)$. The discontinuous case ($d=-1$) is explicitly excluded here, since the $\lph$-gain is not defined for disturbances acting on the last channel, i.e. signals with non-vanishing $n$-th derivative in this case. 
	Similar to the high-gain setup~\cite{Vasiljevic2008}, we consider scaled gains $k_i = \alpha_i L^i$, $i=1,\ldots,n$, where the values of $\alpha_i$ are fixed and gain-scaling $L>0$ is the optimization variable. 
	In the present paper, we restrict ourselves to the homogeneous $\lp[2]$-gain with a clear physical interpretation of the input-output ratio in terms of energy (for the linear case). Further, we choose $n=2$ for illustration purposes.
	We show that the $\lph[2]$-gain estimate grows linearly with gain-scaling $L$ if noise is present  and is inversely proportional to a power of $L$ regarding the disturbance.
	Thus, we prove the existence of a global optimum and propose a procedure to find an optimal scaling $L$ that minimizes the effect of bounded noise and disturbance on the differentiation estimation error in terms of the estimated $\lph[2]$-gain. 
		
	The calculation of the $\lp[2]$-gain for linear systems, where it coincides with the \hinf-norm of the corresponding transfer function, can be done in the frequency domain (see e.g.~\cite{Doyle1990}). The more general nonlinear case is extensively studied e.g. by v. d. Schaft~\cite{Schaft2017}, and generalized to the homogeneous case by Zhang~\cite{Zhang2022}, where the gain is calculated based on the solution to a homogeneous (differential) dissipation inequality (DI).
	This defines a partial differential inequality (PDI) and is hard to solve in general (apart from the linear case, where it reduces to a Riccati inequality~\cite{Doyle1990}).  From asymptotic stability of the error dynamics and the converse Lyapunov theorem, however, the existence of a (homogeneous) Lyapunov function is guaranteed~\cite{Bacciotti2005,Bhat2005}. Knowing that the DI needs to hold for all permissible inputs $u$ (including $u\equiv 0$), any Lyapunov function qualifies as a candidate storage function, i.e. a solution to the PDI. Thus, we utilize the homogeneous Lyapunov function of Cruz-Zavala and Moreno~\cite{CruzZavala2016} at the expense of estimating only an upper bound of the $\lph[2]$-gain.
	 
	In terms of a specific parameter set, we calculate the $\lph[2]$-gain estimates for various homogeneity degrees  and relate the estimate to the actual $\lp[2]$-gain for the linear case. Focusing on a specific homogeneity degree, we provide a simulation example and observe that optimal gain-scaling $L^*$ yields a reasonable trade-off between noise and disturbance affecting the differentiation estimation error. 

	Since the concept of homogeneous input-output maps as well as the homogeneous $\lp$-gain have been introduced just recently, we elaborate on that in Sections~\ref{sec:preliminaries} and~\ref{sec:motivation}. Furthermore, we recall necessary definitions of weighted homogeneous systems and classical $\lp$-gain and show, by means of a simple example, that the latter is not suitable for homogeneous systems.
	Then, the problem statement is presented in Section~\ref{sec:problem_statement} together with the Lyapunov and storage function. Section~\ref{sec:l2h-gain} covers the $\lph[2]$-gain estimation for the homogeneous differentiator and its minimization w.r.t. gain-scaling $L$. Moreover, a numerical evaluation for an exemplary parameter set and the simulation example are part of the section. Finally, conclusions are drawn in Section~\ref{sec:conclusions}.

\section{Preliminaries}\label{sec:preliminaries}
\noindent
	In this article, we consider systems
\begin{equation}\label{eq:system}
			\Sigma: \left\lbrace
	\begin{aligned}
		\dot{x}(t) &= f(x(t),u(t)),\quad x(t_0) = x_0,\\
		y(t) &= h(x(t),u(t))
	\end{aligned}
			\right.
\end{equation}
with input $u(t)\in\R^{n_u}$, state $x(t)\in\R^{n}$ and output $y(t)\in\R^{n_y}$ vectors of dimensions $n_u,n,n_y\in\N$. Both the vector field $f(x,u)$ and function $h(x,u)$ are continuous in $x$ and~$u$ with $u$ being measurable and essentially bounded.
Thus, a solution to~\eqref{eq:system} exists~\cite{Bacciotti2005} and $\Sigma$ defines an input-output mapping $G_{x_0}$ for every $x_0$. That is, solving~\eqref{eq:system} with a given (admissible) input $u(\cdot)$ for the trajectory $x(\cdot)$ leads to the corresponding output $y(\cdot)$.

We briefly recapitulate the main concepts used in the article and show that their straight-forward conjunction is questionable, since the classical $\lp$-gain is a local property for nonlinear, homogeneous systems with inputs and outputs.

\subsection{Classical $\lp$-Stability and $\lp$-Gain}\label{sec:lp_gain}
\noindent
We consider $n$-dimensional Lebesgue-measurable signals $f:[0,\infty)\to\R^n$ with $f\in\lp^n$, i.e.
	$\int_{0}^{\infty} \| f(t)\|^p \mathrm{d}t < \infty$, $p\geq 1$~\cite{Schaft2017},
where $\|\cdot\|$ is any norm in $\R^n$.  This leads us to the definition of the $\lp$-norm as~\cite{Schaft2017}
\begin{equation*}
	\| f \|_{\lp} = \left( \int_{0}^{\infty} \| f(t)\|^p \mathrm{d}t  \right)^{\frac{1}{p}},\quad p\geq 1
\end{equation*}

In a similar manner, the extended $\lp^n$-space, i.e. $\mathcal{L}_{p\mathrm{e}}^n$, can be defined for truncated signals $f_\mathrm{T}$ with $f_\mathrm{T}(t) = f(t),~t\in[0,T)$ and $f_\mathrm{T}(t) = 0,~t\geq T$.

An input-output mapping $G:\mathcal{L}_{p\mathrm{e}}^{n_u}\to \mathcal{L}_{p\mathrm{e}}^{n_y}$ is called $\lp$-stable, if~\cite{Schaft2017}
\begin{equation*}
	u\in\lp^{n_u} \Longrightarrow y = G(u) \in \lp^{n_y}.
\end{equation*}
The mapping has finite $\lp$-gain, if there exist finite constants $\gamma_p$ and $b_p$ such that
\begin{equation}\label{eq:lp_gain}
	\| G(u) \|_{\mathcal{L}_p}\leq \gamma_p \|u\| _{\mathcal{L}_p} + b_p,
\end{equation}
which implies $\lp$-stability of $G$~\cite{Schaft2017}.
The $\lp$-gain of $G$ is defined as $\gamma_p(G) = \inf\{\gamma_p~|~\exists \,b_p\text{ s.t. \eqref{eq:lp_gain} holds}\}$~\cite{Schaft2017}.

Let us consider the $\lp$-gain of state-space representation~$\Sigma$. For this purpose,  we introduce the continuously differentiable storage function $V:\R^n\to[0,\infty)$ and recall the $\lp$ supply rate $s_p:\R^{n_u}\times \R^{n_y}\to\R$ with 
		\begin{equation}\label{eq:l2_supply_rate}
	s_p(u,y) = \gamma^p_p\|u\|^p- \|y\|^p,\quad \gamma\geq 0.
\end{equation} 
System~\eqref{eq:system} has $\lp$-gain $\leq \gamma_p$ if the differential dissipation inequality (DDI)~\cite{Schaft2017}
\begin{equation}\label{eq:DDI}
	\gradient{V}{x}f(x,u) \leq s_p(u,h(x,u))~\forall x,u
\end{equation}
holds. That is, $\Sigma$ is dissipative w.r.t. supply rate $s_p$.
The $\lp$-gain of system~\eqref{eq:system} is defined as $\gamma_p(\Sigma) = \inf\{\gamma_p~|~\Sigma\text{ has $\lp$-gain }\leq \gamma_p\}$~\cite{Schaft2017}.

These classical results can be extended to the well-known case $p = \infty$. Another prominent case is $p=2$, since the physical meaning is obvious and the $\lp[2]$-gain represents the input-output ratio in terms of energy.

After a short introduction into weighted homogeneity and its expansion to systems with inputs and outputs, we consider the homogeneous differentiator and come back to the $\lp$-gain of the corresponding error dynamics.


	\subsection{Weighted Homogeneous Systems with Inputs and  Outputs}
	\noindent
	Following Baciotti and Rosier, we define:
	
	\begin{definition}[Weighted Homogeneous System~{\cite[Chap.~5]{Bacciotti2005}}]\label{def:homogenous_system}
		The \emph{weight vector} associated with the state vector $x = (x_1,\ldots,x_n)^\top$ is denoted by $r = (r_{1},\ldots,r_{n})^\top$ with positive weights $r_{i}>0$. The respective weight vectors for $u$ and $y$ are called $r_u$ and $r_y$. 
		The corresponding \emph{dilation} operator is defined as $\Delta_\kappa^{r}(x) := (\kappa^{r_{1}}x_1,\ldots,\kappa^{r_{n}}x_n)^\top, \kappa>0$. 
		The system~\eqref{eq:system} is called \emph{homogeneous  of degree} $d\in(-\min_i r_{i},\infty)$ if there exist positive weights $r_{i}, r_{u_i},r_{y_i}, i= 1,2,\ldots$ s.t. $\forall u,x$ and $\forall \kappa > 0$ the following holds:
		\begin{align*}
			f_i(\Delta_\kappa^{r}(x),\Delta_\kappa^{r_u}(u)) &= \kappa^{d+r_{i}}f_i(x,u), &\forall i =& 1,\ldots,n,\\
			h_j (\Delta_\kappa^{r}(x),\Delta_\kappa^{r_u}(u)) &= \kappa^{r_{y_j}}h_j(x,u), &\forall j =& 1,\ldots,n_y.
		\end{align*} 
	\end{definition} 
	In order to describe homogeneity of an input-output map~$G$, e.g. $G_{x_0}$ principally defined by \eqref{eq:system}, it is necessary to introduce the weight $r_t=-d$ associated with time $t$. This leads to
	\begin{equation*}
		f_i(\Delta_\kappa^{r}(x),\Delta_\kappa^{r_u}(u)) = \kappa^{d+r_{i}}f_i(x,u) = \frac{\kappa^{r_{i}}\mathsf{d}x_i}{\kappa^{r_t}\mathsf{d}t}
	\end{equation*}
	and means that the trajectory of~\eqref{eq:system} is homogeneous w.r.t. input and time~\cite{Zhang2022}. Thus, a dilated input leads to a dilated state transition, which in turn implies a dilated output and leads to the following definition. 
	\begin{definition}[Homogeneous Input-Output Map~{\cite[Def.~2]{Zhang2022}}]
		The causal and time-invariant input-output map $G$ is termed \emph{$r$-homogeneous of degree} $d=-r_t\in\R$, if
		\begin{equation*}
			G(\Delta_\kappa^{r_u}(u(\kappa^{-r_t}\cdot))) = \Delta_\kappa^{r_y}(y(\kappa^{-r_t}\cdot)),\quad \forall\kappa>0
		\end{equation*} 
		holds for each admissible input $u$ and output $y = G(u)$.
	\end{definition}
	This definition is essential to investigate the $\lp$-gain of homogeneous systems.

		Aside from that, we make use of the $r$-homogeneous $q$-norm defined as~\cite{Bacciotti2005}
\begin{equation}\label{eq:qhnorm}
	\qhnorm{x}{r_x}{q}{}= \left(\sum_{i = 1}^{n}|x_i|^{\frac{q}{r_{x_i}}}\right) ^{\frac{1}{q}},\quad \forall x\in\R^{n},~q\geq 1
\end{equation}
	and unit sphere $\mathcal{S}_{r_x,q} = \left\{\left.x\in\R^{n}~\right|~\qhnorm{x}{r_x}{q}{} = 1\right\}$.	
	Furthermore, consider the following essential property.
	\begin{lem}[Dominating homogeneous function \cite{Andrieu2008,Bhat2005,Hestenes1966}]\label{lem:homog_domination}
		Let $\mu:\R^n\to\R_{\geq 0}$ (i.e. $\mu(x)\geq 0~\forall x\in\R^n$) and $\eta:\R^n\to\R$ be two continuous homogeneous functions with weights $ r= (r_1,\ldots,r_n)$ and degrees $d$, such that
		\begin{equation*}
			\left\lbrace x\in\R^n\backslash\{0\}:\mu(x) = 0\right\rbrace \subseteq
			\left\lbrace x\in\R^n\backslash\{0\}: \eta(x) < 0\right\rbrace.
		\end{equation*}
		Then, there exists $\lambda^*\in\R$ such that
		\begin{equation*}
			\eta(x) - \lambda \mu(x) < -c\|x\|_{r,p}^{d}\quad
			\forall x\in\R^n\backslash \{0\},~\forall \lambda\geq\lambda^*
		\end{equation*}
		holds for some $c>0$.
	\end{lem}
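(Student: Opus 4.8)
The plan is to exploit homogeneity to reduce the claim to the compact unit sphere $\mathcal{S}_{r,p}$ and then argue by continuity and compactness. Since $\eta(x) - \lambda\mu(x)$ and $\|x\|_{r,p}^d$ are both $r$-homogeneous of degree $d$, and $\kappa^d > 0$ for every dilation factor $\kappa > 0$, it suffices to establish $\eta(x) - \lambda\mu(x) < -c$ for all $x \in \mathcal{S}_{r,p}$ (which is compact because all weights $r_i$ are positive): the general case then follows by writing an arbitrary $x \neq 0$ as $\Delta_\kappa^{r}(\bar x)$ with $\kappa = \|x\|_{r,p} > 0$ and $\bar x \in \mathcal{S}_{r,p}$, and multiplying the sphere estimate by $\kappa^d$, since $\eta(x) = \kappa^d\eta(\bar x)$, $\mu(x) = \kappa^d\mu(\bar x)$ and $\|x\|_{r,p}^d = \kappa^d$.

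First I would isolate the zero set $Z := \{x \in \mathcal{S}_{r,p} : \mu(x) = 0\}$. As the intersection of the compact sphere with the preimage of a closed set under the continuous map $\mu$, the set $Z$ is compact. If $Z = \emptyset$, then $\mu$ is continuous and strictly positive on the compact sphere, hence bounded below by some $m > 0$, while $\eta$ is bounded above by some $M \in \R$; choosing $\lambda^* := \max\{0, (M+1)/m\}$ and $c := 1$ already gives the claim. If $Z \neq \emptyset$, the hypothesis yields $\eta < 0$ on $Z$, so by compactness $\max_{x \in Z} \eta(x) = -2c_0$ for some $c_0 > 0$, and by continuity of $\eta$ there is a relatively open neighbourhood $U \supseteq Z$ in $\mathcal{S}_{r,p}$ with $\eta(x) < -c_0$ for all $x \in U$.

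Next I would work on the complement $K := \mathcal{S}_{r,p} \setminus U$, which is compact and disjoint from $Z$; hence $\mu$ is continuous and strictly positive on $K$ and attains a minimum $m > 0$ there, while $\eta$ attains a finite maximum $M$ on $K$. Setting $\lambda^* := \max\{0, (M+c_0)/m\}$ and $c := c_0/2$, one checks the two cases using that $\mu \geq 0$ makes $\eta - \lambda\mu$ nonincreasing in $\lambda$: for $x \in U$ and $\lambda \geq \lambda^* \geq 0$, $\eta(x) - \lambda\mu(x) \leq \eta(x) < -c_0 < -c$; for $x \in K$ and $\lambda \geq \lambda^*$, $\eta(x) - \lambda\mu(x) \leq M - \lambda m \leq M - \lambda^* m \leq -c_0 < -c$. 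Thus $\eta - \lambda\mu < -c$ on all of $\mathcal{S}_{r,p}$ for every $\lambda \geq \lambda^*$, and the homogeneity argument of the first paragraph completes the proof on $\R^n \setminus \{0\}$.

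The main obstacle is not a deep difficulty but a point requiring care: one cannot simply bound $\eta$ and $\mu$ separately and globally, because near $Z$ the function $\mu$ is arbitrarily small and cannot be amplified by $\lambda$ to beat $\eta$; the fix is to \emph{separate} $Z$ from $K$ by the open set $U$, so that on $U$ the (possibly small) negativity of $\eta$ inherited from $Z$ already does the job, while on the compact piece $K$ the strictly positive lower bound on $\mu$ lets a sufficiently large $\lambda$ dominate the bounded $\eta$. The remaining details to watch are the degenerate case $Z = \emptyset$ and the passage from non-strict to strict inequality, both absorbed by the slack between $c$ and $c_0$.
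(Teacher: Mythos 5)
Your argument is correct, and it is the standard one: the paper itself does not prove this lemma (it is quoted from the cited references), but the proof given there is precisely this reduction to the compact homogeneous unit sphere followed by a separation of the zero set of $\mu$ from its complement, so you have essentially reconstructed the reference proof rather than found a different route. The homogeneity step is sound because $\eta-\lambda\mu$ and $\|\cdot\|_{r,p}^{d}$ share the degree $d$ and $\kappa^{d}>0$, and the two-region estimate on $U$ and $K=\mathcal{S}_{r,p}\setminus U$ is handled correctly, including the monotonicity in $\lambda$ from $\mu\geq 0$. Two cosmetic points: in the degenerate case $Z=\emptyset$ your choice $c:=1$ only yields $\eta-\lambda\mu\leq -1$, so take e.g. $c:=1/2$ there to keep the strict inequality (exactly the slack you already use in the main case), and note that $K$ may be empty, in which case the $U$-estimate alone suffices with $\lambda^{*}=0$.
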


	\subsection{Homogeneous Arbitrary Order Differentiator}
	We consider the homogeneous differentiator given by~\eqref{eq:system} with~\cite{CruzZavala2016}
		\begin{subequations}\label{eq:hnod}
		\begin{align}
			f_i(x,f_\mathrm{n}) &= -k_i \signpow{x_1 - f_\mathrm{n} } ^{\frac{r_{i+1}}{r_1}} + x_{i+1},~ i = 1,\ldots,n-1 \nonumber\\
			f_{n} (x,f_\mathrm{n}) &= -k_{n} \signpow{x_1 - f_\mathrm{n} } ^{\frac{r_{n+1}}{r_1}},\\
			h(x,f_\mathrm{n}) &= x_{n}-f_0^{(n-1)}
		\end{align}
	\end{subequations}
	where $\signpow{\cdot}^p = \mathrm{sign}(\cdot)|\cdot|^p,~p\in\R$ denotes the sign-preserving power. The  Lebesgue-measurable function $f_\mathrm{n}(t) = f_0(t) - \nu(t)$ consists of $n$-times differentiable base signal  $f_0$ to be differentiated and bounded noise $|\nu| \leq N$ with $N\geq 0$. The weights are assigned as $r_i = r_{i+1}-d = r_{n} -(n-i)d$, $i=1,\ldots,n+1$, where we fix $r_{n} = 1$ and allow $d\in(-1,1)$ and the gains are $k_i>0$. We formally introduce the output function $h$ as $(n-1)$-th error in the differentiation of base signal $f_0$.

	In order to utilize the Lyapunov function of~\cite{CruzZavala2016,CruzZavala2019}, consider the scaled error dynamics
	\begin{subequations}\label{eq:hnod_z}
		\begin{align}
			\dot{z}_i &= -\tilde{k}_i \left(\signpow{z_1+\nu}^{\frac{r_{i+1}}{r_1}}- z_{i+1}\right), ~ i = 1,\ldots,n-1 \nonumber\\
			\dot{z}_n &= -\tilde{k}_n \signpow{z_1 + \nu}^{\frac{r_{n+1}}{r_1}}+ \tilde{\delta}, \\
			y &= \tilde k_1 z_n,
		\end{align}
	\end{subequations}
	where $
		z_i =  \frac{x_i- f_0^{(i-1)}}{k_{i-1}},  
		\tilde{k}_i = \frac{k_i}{k_{i-1}}$ for $i=1,\ldots,n$ 
	and $k_0 = 1$.
	The disturbance $\delta = -f_0^{(n)}$ is assumed to be bounded by $D\geq 0$, i.e. $|\delta| \leq D$, and scaled with $\displaystyle\tilde{\delta} = \frac{\delta}{k_1}$. Although $\delta\in\lph$ (the homogeneous $\lp$-space, see Sec.~\ref{sec:lph_gain}) is sufficient for the calculations in this article, we require boundedness of $\delta$ to ensure ultimate uniform  boundedness of the differentiation estimation error for $d\in(-1,0]$~\cite{CruzZavala2016}.

	
	
	Note that with weight $r_t = -d$ the error dynamics define a homogeneous input-output map, where $u = (\nu,~\delta)^\top$ with weights $r_u = (r_\nu,~r_\delta) =(r_1,~r_{n+1})$.

	\section{Classical $\lp$-Gain versus Homogeneity}\label{sec:motivation}
	For simplicity of exposition, we reduce the following derivations to the two-dimensional case. 
	By means of the homogeneous differentiator, we show that the classical $\lp$-gain is a local property only, i.e. it lacks invariance with respect to homogeneous dilation. This leads to the introduction of a homogeneous $\lp$-gain. 
	\subsection{Second-order Homogeneous Differentiator}\label{sec:h1od_z}
	The error dynamics~\eqref{eq:hnod_z} with $n=2$ read
		\begin{subequations}\label{eq:h1od_z}
		\begin{align}
			\dot{z}_1 &= -\tilde{k}_1 \left(\signpow{z_1+\nu}^{\frac{1}{1-d}} - z_2\right), &z_1(0)& = z_{1,0}\\
			\dot{z}_2 &= -\tilde{k}_2 \signpow{z_1 + \nu}^{\frac{1+d}{1-d}} + \tilde{\delta}, &z_2(0)& = z_{2,0}\\
			y &= \tilde k_1 z_2,
		\end{align}
	\end{subequations}
	where the respective weights of inputs $u = (\nu,~\delta)^\top$, output and states are given by
	\begin{align*}
		r_u &= (r_\nu,~r_\delta) = (1-d,~1+d), &r_y &=  1,\\
		r_z &= (r_1,~r_2) = (1-d,~1). & &		
	\end{align*}

	\subsection{Limitations of the $\lp$-Gain for Homogeneous Systems}\label{sec:limitations}
	To investigate the suitability of the classical $\lp$-gain for homogeneous systems, consider the special case of~\eqref{eq:h1od_z} where no noise is present, i.e. $\nu\equiv 0$.
	For zero initial conditions, the nonzero input $u_1 = (0,~\delta_1)$ with $\delta_1\in\lp$ yields the corresponding output~$y_1$. This results in the ratio
	\begin{equation*}
		\Gamma(u_1,y_1) = \frac{\lpnorm{y_1}}{\lpnorm{u_1}}.
	\end{equation*}
	Recall that~\eqref{eq:h1od_z} defines a homogeneous input-output mapping, i.e. applying the dilated input $u_2 (\cdot) = \Delta_\kappa^{r_u}(u_1(\kappa^{-r_t}\cdot))$ yields the dilated output $y_2(\cdot) = \Delta_\kappa^{r_y}(y(\kappa^{-r_t}\cdot))$. The corresponding $\lp$-norms read
	\begin{align*}
		\lpnorm{u_2} &= \left( \int_{0}^{\infty} | \delta_2(t)|^p \mathrm{d}t  \right)^{\frac{1}{p}} = \left( \int_{0}^{\infty} | \kappa^r_\delta\, \delta_1(\kappa^{-r_t}t)|^p \mathrm{d}t  \right)^{\frac{1}{p}}\\
			& = \kappa^{r_\delta + \frac{r_t}{p}}\left( \int_{0}^{\infty} | \delta_2(\tilde t)|^p \mathrm{d}\tilde t  \right)^{\frac{1}{p}} = 	\kappa^{r_\delta + \frac{r_t}{p}}\lpnorm{u_1},\\
		\lpnorm{y_2} &= \kappa^{r_y + \frac{r_t}{p}}\lpnorm{y_1}
	\end{align*}
	and lead to the ratio
		\begin{equation*}
		\Gamma(u_2,y_2) = \frac{\lpnorm{y_2}}{\lpnorm{u_2}} = \kappa^{r_y-r_\delta} \Gamma(u_1,y_1)  = \kappa^{-d} \Gamma(u_1,y_1).
	\end{equation*}
	This ratio directly relates to $\gamma_p$ of~\eqref{eq:lp_gain}~\cite{Zhang2022} and is only constant for $d=0$, i.e. the linear case. If $d<0$, it grows unbounded for $\kappa\to\infty$ and if $d>0$ this happens for $\kappa\to 0$. An illustration is depicted in Fig.~\ref{fig:gamma_quotient} of Section~\ref{sec:numerics}  for an exemplary parameter choice. We conclude that the classical $\lp$-gain is not suitable for homogeneous systems.

	\subsection{Homogeneous $\lp$-Gain}\label{sec:lph_gain}
	In order to define a dilation-invariant and global $\lp$-like gain, we follow a similar path to Sec.~\ref{sec:lp_gain}, where the measurable signals $f:[0,\infty)\to\R^n$ exhibit the weight vector $r_f$.
	Using the homogeneous $q$-norm~\eqref{eq:qhnorm}, define the homogeneous $\lp$-norm (i.e. the $\lph$-norm) as~\cite{Zhang2022}
	\begin{equation}\label{eq:lphnorm}
		\lphnorm{f}{r_f} = \left( \int_{0}^{\infty} \qhnorm{f(t)}{r_f}{q}{p} \mathrm{d}t  \right)^{\frac{1}{p}},\quad q\geq 1,~p\geq 1,
	\end{equation}
	provided the right-hand side exists, i.e. $f\in\lph^n$ (the homogeneous $\lp^n$-space)~\cite{Zhang2022}.
	Equipped with the $\lph$-norm and $\lph^n$-space, the concept of $\lph$-stability and $\lph$-gain can be defined analogously to Sec.~\ref{sec:lp_gain} for homogeneous input-output maps $G_\mathrm{h}$ (see~\cite{Zhang2022}).  Similar to~\eqref{eq:lp_gain},  $G_\mathrm{h}$ has finite $\lph$-gain if there exist constants $\gamma_{\mathrm{h}p}\geq 0$, $b_{\mathrm{h}p}\geq 0$
	\begin{equation}\label{eq:lph_gain}
		\| G_\mathrm{h}(u) \|_{\lph}\leq \gamma_{\mathrm{h}p}\|u\| _{\lph} + b_{\mathrm{h}p},~p\geq 1
	\end{equation}
	for $u\in\lph$ and the $\lph$-gain of $G_\mathrm{h}$ is defined as  $\gamma_{\mathrm{h}p}(G_\mathrm{h}) = \inf\{\gamma_{\mathrm{h}p}~|~\exists \,b_{\mathrm{h}p}\text{ s.t. \eqref{eq:lph_gain} holds}\}$~\cite{Zhang2022}.
	
	Let us consider the $\lph$-gain of the $r$-homogeneous state-space representation $\Sigma$ of degree $d = -r_t$ (see~\eqref{eq:system} and Def.~\ref{def:homogenous_system}).
	
	Given the $r$-homogeneous continuously differentiable storage function $V_\mathrm{h}:\R^n\to[0,\infty)$ of degree $p-d>0$, the system~$\Sigma$ has $\lph$-gain $\leq \gamma_{\mathrm{h}p}$ if it is dissipative w.r.t. the homogeneous $\lp$ supply rate~\cite{Zhang2022}
		\begin{equation}\label{eq:l2h_supply_rate}
		s_{\mathrm{h}p}(u,y) = \gamma_{\mathrm{h}p}^p\qhnorm{u}{r_u}{q}{p} - \qhnorm{y}{r_y}{q}{p}.
	\end{equation} 
	That is, defining the $r$-homogeneous of degree $p$ value-function
	\begin{equation}\label{eq:DI_value_function}
		\mathcal{J}(x,u,y;\gamma_{\mathrm{h}p}) = 	\gradient{V_\mathrm{h}}{x} f(x,u) + \qhnorm{y}{r_y}{q}{p}  - \gamma_{\mathrm{h}p}^p\qhnorm{u}{r_u}{q}{p},
	\end{equation}
	the homogeneous DDI (hDDI)~\cite{Zhang2022}
	\begin{equation}\label{eq:DI}
	\mathcal{J}(x,u,y;\gamma_{\mathrm{h}p}) <0
	\end{equation}
	holds for all $u\in\lph^{n_u}$ and $y\in\lph^{n_y}$. Finally, the $\lph$-gain of $\Sigma$ is defined as $\gamma_{\mathrm{h}p}(\Sigma) = \inf\{\gamma_{\mathrm{h}p}~|~\Sigma\text{ has $\lph$-gain }\leq \gamma_{\mathrm{h}p}\}$~\cite{Zhang2022}.\\

	\subsection{Example Revisited}
	Consider the example discussed in Sec.~\ref{sec:limitations} and denote the ratio of homogeneous $\lp$-norms by
	\begin{equation*}
		\Gamma_\mathrm{h}(u_1,y_1)  =  \frac{\lphnorm{y_1}{r_y}}{\lphnorm{u_1}{r_u}}.
	\end{equation*}
	A straightforward calculation of the dilated in- and output's $\lph$-norms (analogous to the $\lp$-norms) yields
	\begin{align*}
		\lphnorm{u_2}{r_u} &= \kappa^{1+\frac{r_t}{p}}\lphnorm{u_1}{r_u},\\
		\lphnorm{y_2}{r_y} &= \kappa^{1+\frac{r_t}{p}}\lphnorm{y_1}{r_y}	
	\end{align*}
	resulting in the ratio
	\begin{equation*}
			\Gamma_\mathrm{h}(u_2,y_2)  =  \frac{\lphnorm{y_2}{r_y}}{\lphnorm{u_2}{r_u}} =  \kappa^{0} \frac{\lphnorm{y_1}{r_y}}{\lphnorm{u_1}{r_u}} = \Gamma_\mathrm{h}(u_1,y_1),
	\end{equation*}
	which is constant $\forall\,\kappa>0$ and thus  dilation-invariant.
	We conclude that the homogeneous $\lp$-gain is suitable for homogeneous systems and for more details refer the interested reader to Zhang~\cite{Zhang2022}.
	
	\section{Problem Statement}\label{sec:problem_statement}
		We try to minimize the homogeneous $\lp[2]$-gain from input~$u$ to output~$y$ by appropriate choice of the gains $k_1,k_2$. In the spirit of the high-gain observer (e.g.~\cite{Atassi2000,Vasiljevic2008}), we make use of scaled gains
	\begin{equation}\label{eq:gains_scaled}
		k_1 = \alpha_1 L \quad \text{and}\quad k_2 = \alpha_2L^2
	\end{equation}
	with fixed $\alpha = (\alpha_1,~\alpha_2)^\top$ but variable gain scaling $L>0$.  This reduces the degrees of freedom significantly, however, simplifies a generalization to the arbitrary-order differentiator.

	In general DDI~\eqref{eq:DDI} and hDDI~\eqref{eq:DI} are partial differential inequalities and hard to solve. Only in the linear case, where the storage function can be chosen as a quadratic form, they boil down to a Riccati inequality~\cite{Schaft2017}. Instead of searching for a storage function $V_\mathrm{h}$ leading to the $\lph$-gain $\gamma_{\mathrm{h}p}$, we proceed as follows. 
	Since hDDI~\eqref{eq:DI} needs to hold $\forall\,u\in\lph$, it necessarily holds for $u\equiv 0$. Given the asymptotic stability of \eqref{eq:hnod_z}, a natural candidate storage function is any homogeneous Lyapunov function $V_\mathrm{l}$ of appropriate degree, who's existence is ensured by the converse Lyapunov theorem ~\cite{Bacciotti2005,Bhat2005}. Using Lemma~\ref{lem:homog_domination}, an appropriate scaling $a>0$ can be found that qualifies $V_\mathrm{h}= aV_\mathrm{l}$ as storage function. This makes it possible to calculate an upper bound $\hat{\gamma}_{\mathrm{h}p} \geq \gamma_{\mathrm{h}p}$.

	\subsection{Homogeneous Lyapunov and Storage Functions}

	Consider the scaled error dynamics~\eqref{eq:h1od_z} with zero input, i.e. the noise- and disturbance-free case ($\nu\equiv 0$ and $\delta \equiv 0$).  A Lyapunov function of homogeneity degree $d_\mathrm{V} = 2-d$ is given by~\cite{CruzZavala2016,CruzZavala2019} 
	\begin{equation}\label{eq:LF}
		V_\mathrm{l}(z_1,z_2) = \frac{1-d}{2-d}|z_1|^{\frac{2-d}{1-d}} - z_1 z_2 + \frac{1+\beta}{2-d}|z_2|^{2-d},\quad \beta > 0.
	\end{equation}
	It is positive definite and continuously differentiable for $d\in[-1,1)$ with homogeneous derivative  
	\begin{multline}\label{eq:dVl}
		\dot{V}_\mathrm{l}(z_1,z_2) = -\tilde k_1 \underbrace{\left|	\signpow{z_1}^{\frac{1}{1-d}}-z_2 	\right|^2}_{\mu(z_1,z_2)}\\
		+\tilde{k}_2 \underbrace{ \left[(1+\beta)	\left(z_1-\signpow{z_2}^{1-d}\right) \signpow{z_1}^{\frac{1+d}{1-d}}
		-\beta|z_1|^{\frac{2}{1-d}}\right]}_{\eta(z_1,z_2)}.
	\end{multline}
	Note that for $z_2 = \signpow{z_1}^{\frac{1}{1-d}}$ it simplifies to $\dot{V}_\mathrm{l} = 	-\tilde{k}_2\beta|z_1|^{\frac{2}{1-d}}$ which is negative. Thus, with the help of Lemma~\ref{lem:homog_domination}, $\dot V_\mathrm{l}$ can be rendered negative definite by appropriate choice of $\tilde{k}_1$.  From~\eqref{eq:dVl}, we can derive the lower bound as
	\begin{equation}\label{eq:condition_dVl_negative}
		\frac{\tilde{k}_1}{\tilde{k}_2} > \max_{z\in\R^2} g(z_1,z_2) \quad \text{with}\quad g(z_1,z_2) =  \frac{\eta(z_1,z_2)}{\mu(z_1,z_2)}.
	\end{equation}
	The function $g$ is upper semicontinuous and homogeneous of degree zero. Hence, it achieves a maximum and the search can be restricted to the homogeneous unit sphere, i.e. to $\mathcal{S}_{r_z,q}$ with $q \geq 1$. Choosing the gains conform with Condition~\eqref{eq:condition_dVl_negative}  ensures $\dot V_{\mathrm{l}} < 0$ and implies stability of the error dynamics~\eqref{eq:h1od_z} for $\nu\equiv 0$ and $\delta \equiv 0$. Since this is necessary for the following derivations, we assume throughout the rest of the article:

	\begin{ass}[Stabilizing Gains]\label{ass:gains}
		The differentiator gains $\tilde{k}_1$ and $\tilde{k}_2$ are chosen such that $V_\mathrm{l}$ is a Lyapunov function for the error dynamics~\eqref{eq:h1od_z} with $\nu\equiv 0$ and $\delta \equiv 0$. That is, Condition~\eqref{eq:condition_dVl_negative} is satisfied. 
	\end{ass}

	\begin{remark}[Independence of $L$]
		Note that the requirement for stabilizing gains
	$
			\frac{\tilde{k}_1}{\tilde{k}_2} = \frac{k_1^2}{k_2} = \frac{\alpha_1^2 L^2}{\alpha_2 L^2} = \frac{\alpha_1^2}{\alpha_2}
	$	is scaling-invariant. Therefore, the scaling with $L$ does not affect stability of the error dynamics~\eqref{eq:h1od_z}.
	\end{remark}

	As pointed out by Zhang~\cite{Zhang2022} and discussed earlier, a scaled version of Lyapunov function $V_\mathrm{l}$ can be utilized as a storage function $V = aV_\mathrm{l}$, $a>0$ for dissipation inequality~\eqref {eq:DI}. 
	We adapt the ideas to the present case.
	
	\begin{thm}[Storage function for differentiator~\eqref{eq:h1od_z}]\label{thm:SF}
		Under Assumption~\ref{ass:gains} and with scaled gains~\eqref{eq:gains_scaled}, the function $V_\mathrm{h}(z_1,z_2) = aV_\mathrm{l}(z_1,z_2)$ serves as a storage function for the error dynamics~\eqref{eq:h1od_z} w.r.t. the homogeneous $\lp[2]$ supply rate~\eqref{eq:l2h_supply_rate}, if  we choose $a = \tilde{a}L$ with
		\begin{subequations}\label{eq:thm_SF}
		\begin{align}
			 \tilde{a} &> M = \max_{z\in\mathcal{S}_{r_z,2}} m(z)\quad\text{and}\\
				m(z) &= \frac{\alpha_1^2|z_2|^2 }{  \alpha_1 \mu(z_1,z_2)
				- \frac{\alpha_2 }{\alpha_1}\eta(z_1,z_2)},\label{eq:thm_SF_m}
		\end{align}
	\end{subequations}
		where $\mu$ and $\eta$ are defined in~\eqref{eq:dVl} and $M$ is independent of gain scaling $L$.
	\end{thm}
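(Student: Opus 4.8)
\emph{Proof plan.} Since $V_\mathrm{l}$ is positive definite, continuously differentiable and $r_z$-homogeneous of degree $2-d = p-d$, the scaled function $V_\mathrm{h} = aV_\mathrm{l}$, $a>0$, inherits these properties and is an admissible storage-function candidate; it thus remains to produce a finite $\gamma_{\mathrm{h}2}\geq 0$ for which the hDDI~\eqref{eq:DI} holds. The plan is to split the value function~\eqref{eq:DI_value_function} into its unforced part (evaluated at $u=0$) and a remainder controlled by the supply term via Lemma~\ref{lem:homog_domination}. First I would differentiate $V_\mathrm{h}$ along~\eqref{eq:h1od_z}: for $u=0$ this reproduces $a\dot V_\mathrm{l} = a(-\tilde k_1\mu + \tilde k_2\eta)$ from~\eqref{eq:dVl}, while $y = \tilde k_1 z_2$ with $r_y = 1$ gives $\qhnorm{y}{r_y}{q}{2} = \tilde k_1^2|z_2|^2 = \alpha_1^2 L^2|z_2|^2$. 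Inserting the scaled gains~\eqref{eq:gains_scaled} (so that $\tilde k_1 = \alpha_1 L$, $\tilde k_2 = \tfrac{\alpha_2}{\alpha_1}L$) and $a = \tilde a L$, the $L$-dependence factors out,
\begin{equation*}
	\mathcal{J}(z,0,y;\gamma_{\mathrm{h}2}) = L^2\Bigl[\alpha_1^2|z_2|^2 - \tilde a\,D(z)\Bigr],\qquad D(z) := \alpha_1\mu(z) - \tfrac{\alpha_2}{\alpha_1}\eta(z),
\end{equation*}
with $D$ continuous and $r_z$-homogeneous of degree two, and independent of $L$.

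The central step is to show $D(z) > 0$ for all $z\neq 0$, so that $m(z) = \alpha_1^2|z_2|^2/D(z)$ is well defined. On $\{z\neq 0 : \mu(z) > 0\}$ one writes $D(z) = \tfrac{\alpha_2}{\alpha_1}\mu(z)\bigl(\tfrac{\alpha_1^2}{\alpha_2} - g(z)\bigr)$ with $g = \eta/\mu$, and Assumption~\ref{ass:gains} together with $\tfrac{\tilde k_1}{\tilde k_2} = \tfrac{\alpha_1^2}{\alpha_2}$ (the scaling-invariant ratio) and Condition~\eqref{eq:condition_dVl_negative} give $\tfrac{\alpha_1^2}{\alpha_2} > \max_z g(z) \geq g(z)$, hence $D(z) > 0$. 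On $\{z\neq 0 : \mu(z) = 0\}$ one necessarily has $z_1\neq 0$ (otherwise $z_2 = \signpow{0}^{\frac{1}{1-d}} = 0$), and as noted below~\eqref{eq:dVl} then $\eta(z) = -\beta|z_1|^{\frac{2}{1-d}} < 0$, so again $D(z) = -\tfrac{\alpha_2}{\alpha_1}\eta(z) > 0$. Being continuous and $r_z$-homogeneous of degree two with $D>0$ off the origin, $D$ attains a strictly positive minimum on the compact unit sphere $\mathcal{S}_{r_z,2}$; consequently $m$ is continuous and $r_z$-homogeneous of degree zero on $\R^2\backslash\{0\}$, so $M = \max_{z\in\mathcal{S}_{r_z,2}}m(z) = \sup_{z\neq 0}m(z) < \infty$ exists and does not involve $L$, which proves the final assertion.

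For $\tilde a > M$ we then obtain $\tilde a > m(z)$, i.e.\ $\alpha_1^2|z_2|^2 - \tilde a\,D(z) = -D(z)\bigl(\tilde a - m(z)\bigr) < 0$ for every $z\neq 0$, hence $\mathcal{J}(z,0,y;\gamma_{\mathrm{h}2}) < 0$ for $z\neq 0$ (it vanishes at $z=0$). To pass to arbitrary inputs $u = (\nu,\delta)^\top$ I would invoke Lemma~\ref{lem:homog_domination} in the joint variables $(z_1,z_2,\nu,\delta)$ with weights $(r_z,r_u)$ and degree two, applied to the nonnegative function $\mu_0(z,u) = \qhnorm{u}{r_u}{q}{2}$ and to $\eta_0(z,u) = \frac{\partial V_\mathrm{h}}{\partial z}f(z,u) + \qhnorm{y}{r_y}{q}{2}$, both continuous and $(r_z,r_u)$-homogeneous of degree two. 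The zero set $\{\mu_0 = 0\}\backslash\{0\} = \{(z,0): z\neq 0\}$ lies in $\{\eta_0 < 0\}$ precisely because $\eta_0(z,0) = \mathcal{J}(z,0,y;\gamma_{\mathrm{h}2}) < 0$ for $z\neq 0$, so the lemma yields $\lambda^* \geq 0$ such that $\eta_0(z,u) - \gamma_{\mathrm{h}2}^2\mu_0(z,u) < 0$ for all $(z,u)\neq 0$ whenever $\gamma_{\mathrm{h}2}^2 \geq \lambda^*$; but this left-hand side equals $\mathcal{J}(z,u,y;\gamma_{\mathrm{h}2})$, so~\eqref{eq:DI} holds for every sufficiently large $\gamma_{\mathrm{h}2}$ and $V_\mathrm{h}$ is a storage function for~\eqref{eq:h1od_z} w.r.t.\ the supply rate~\eqref{eq:l2h_supply_rate}. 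I expect the main obstacle to be the positivity of $D$ on all of $\R^2\backslash\{0\}$: the stabilizing-gains condition is phrased through $g = \eta/\mu$, which is defined only where $\mu\neq 0$, so the zero set of $\mu$ must be handled separately via the structural identity $\eta|_{\mu = 0} = -\beta|z_1|^{\frac{2}{1-d}}$ — the same dominating-function mechanism that underpins both Assumption~\ref{ass:gains} and the closing application of Lemma~\ref{lem:homog_domination}.
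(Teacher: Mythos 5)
Your proposal is correct and follows essentially the same route as the paper's proof: factor out the $L$-dependence with $a=\tilde a L$ and the scaled gains, make the unforced part negative definite by choosing $\tilde a>M$, and then invoke Lemma~\ref{lem:homog_domination} in the joint variables $(z,u)$ with the non-negative supply term $\qhnorm{u}{r_u}{2}{2}$ as the dominating function to obtain a finite $\hat\gamma$. The only difference is that you spell out the positivity of the denominator $D=\alpha_1\mu-\tfrac{\alpha_2}{\alpha_1}\eta$ (handling the set $\{\mu=0\}$ separately), a detail the paper subsumes directly under Assumption~\ref{ass:gains} via negative definiteness of $\dot V_\mathrm{l}$.
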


	\begin{proof}
		We rewrite the dissipation inequality~\eqref{eq:DI} and make use of Lemma~\ref{lem:homog_domination} twice. 
		The derivative $\dot{V}$ reads
		\begin{align*}
			\dot{V}(z)\! &= a\frac{\partial V_\mathrm{l}(z)}{\partial z}\dot z \\
			\!&=a\left[   - \tilde{k}_1 \left(\signpow{z_1}^{\frac{1}{1-d}} - z_2\right) \left(\signpow{z_1+\nu}^{\frac{1}{1-d}} - z_2\right)\right.\\ 
			\!& \left.  + \left(- z_1 + (1 + \beta)\signpow{z_2}^{1-d} \right) \left(-\tilde{k}_2 \signpow{z_1+\nu}^{\frac{1+d}{1-d}} + \tilde{\delta}\right)\right].
		\end{align*}
		With homogeneous 2-norm 
			$\qhnorm{z}{r_z}{2}{2} =  |z_1|^{\frac{2}{1-d}} + |z_2|^{\frac{2}{1}} $,
		the respective output and input norms are
		\begin{equation*}
			\qhnorm{y}{r_y}{2}{2} = \tilde k_1^2|z_2|^2,\quad
			\qhnorm{u}{r_u}{2}{2} = |\nu|^{\frac{2}{1-d}} + |\delta|^{\frac{2}{1+d}}.
		\end{equation*} 
		We substitute the scaled gains~\eqref{eq:gains_scaled} and choose $a = \tilde{a} L$. A division by $L^2>0$ yields the dissipation inequality
		\begin{equation*}
			\hat{\mathcal{J}}(z,\nu,\delta) < 0,
		\end{equation*}
		where		
		\begin{multline}\label{eq:Jhat}
				\hat{\mathcal{J}}(z,\nu,\delta) = \tilde{a}  \Bigg[   - \alpha_1 \left(\signpow{z_1}^{\frac{1}{1-d}} - z_2\right) \left(\signpow{z_1+\nu}^{\frac{1}{1-d}} - z_2\right)\\
			+ \left(- z_1 + (1 + \beta)\signpow{z_2}^{1-d} \right) \left(-\frac{\alpha_2 }{\alpha_1}\signpow{z_1+\nu}^{\frac{1+d}{1-d}} + \frac{\delta}{L^2\alpha_1}\right)\Bigg]\\
			+ \alpha_1^2|z_2|^2 - \left(\frac{\hat{\gamma}}{L}\right)^2\left(|\nu|^{\frac{2}{1-d}} + |\delta|^{\frac{2}{1+d}}\right).
		\end{multline}	
		with a homogeneous of degree  $d_\mathcal{J}=2$ left-hand side.  \\
		To apply Lemma~\ref{lem:homog_domination}, observe that the previous inequality simplifies to
		\begin{equation*}
			-\tilde{a}\left[ \alpha_1 \mu(z_1,z_2) - \frac{\alpha_2 }{\alpha_1}\eta(z_1,z_2)\right]+ \alpha_1^2|z_2|^2 < 0
		\end{equation*}
		for $u\equiv 0$, where $\mu$ and $\eta$ are defined in~\eqref{eq:dVl}. It is satisfied if we choose $\tilde{a}$ as proposed in~\eqref{eq:thm_SF}.  With Assumption~\ref{ass:gains}, we know that $\dot V_\mathrm{l}$ is negative definite leading to a positive definite denominator of $m$ in~\eqref{eq:thm_SF_m}. Thus, $m$ is continuous $\forall z\neq 0$. Since $m$  is homogeneous of degree zero, it achieves a maximum that can be found on the unit sphere. \\
		Observe that $\qhnorm{u}{r_u}{2}{2}$ is non-negative. Given a proper choice of $\tilde{a}$, we can thus use Lemma~\ref{lem:homog_domination} to ensure the existence of $\hat{\gamma}^*$ such that dissipation inequality~\eqref{eq:DI} holds for every $\hat{\gamma} \geq \hat{\gamma}^*$. This qualifies $V$ as a storage function for the error dynamics~\eqref{eq:h1od_z} w.r.t. $\lph[2]$ supply rate~\eqref{eq:l2h_supply_rate} ($p=2$).		
	\end{proof}
	
	This directly leads to the following assumption.
	\begin{ass}[Storage function]\label{ass:SF}
		The storage function $V$ for the differentiator's error dynamics~\eqref{eq:h1od_z} reads \begin{equation*}
			V_\mathrm{h}(z_1,z_2) = \tilde a L V_\mathrm{l}(z_1,z_2),
		\end{equation*}
		where the constant $\tilde{a}>0$ is chosen conform with Theorem~\ref{thm:SF}.
	\end{ass}
	
	\section{Homogeneous $\lp[2]$-Gain of the Differentiator}\label{sec:l2h-gain}
 	With the help of storage function $V_\mathrm{h}$ we estimate an upper bound of the $\lph[2]$-gain.  Then, we show that there exists a global minimum of the estimate w.r.t. gain-scaling $L$ and propose a procedure to obtain it.
 	
	\subsection{Estimation with Fixed Parameters}
	We use $V_\mathrm{h}$ to estimate an upper bound~$\hat{\gamma}$ on the homogeneous $\lp[2]$-gain from input~$u$ to output~$y$ for fixed values of the gains $\alpha_i$, $i=1,2$, gain scaling $L$, Lyapunov function parameter $\beta$ and Lyapunov function scaling $\tilde{a}$.
	
	Zhang~\cite{Zhang2022} proposes two approaches for this purpose depending on the system structure. For dynamics affine in the input, the worst-case input  $u^*$ can be calculated from the partial derivative of value function $\mathcal{J}$ in~\eqref{eq:DI_value_function}  w.r.t $u$ (in line with the linear case and classical $\mathcal{H}_\infty$-norm calculation). Then, the minimum $\hat{\gamma}$ is found such that $\mathcal{J}<0$ holds. Otherwise, $\hat{\gamma}^2$ is obtained by maximizing the remainder of dissipation inequality~\eqref{eq:DI} solved for $\hat{\gamma}^2$.
	
	In the present case~\eqref{eq:h1od_z}, the error dynamics are only affine in the disturbance $\delta$. This does not apply to the measurement noise $\nu$ (unless $d= 0$). Thus the result presented here adapts the ideas of Zhang, where we introduce the parameter $\gamma$ as argument of the respective functions to highlight their dependency.
	
	\begin{prop}[Estimation of the ${\lph[2]}$-gain]\label{prop:calculatioin}
		Under Assumptions~\ref{ass:gains} and~\ref{ass:SF}, an upper estimate $\hat{\gamma}_{\mathrm{h}2}=\hat\gamma(V_\mathrm{h}) $ on the  homogeneous $\lp[2]$-gain of the differentiator's error dynamics~\eqref{eq:h1od_z} can be calculated as
		\begin{equation*}
			\hat\gamma(V_\mathrm{h}) = \arg\min_{\gamma\geq 0}\left\lbrace\max_{\qhnorm{(z,~\nu)}{(r_z,r_\nu)}{2}{}=1} \tilde{ \mathcal{J}}(z,\nu;\gamma)< 0\right\rbrace
		\end{equation*}
		where
		\begin{align*}
			\tilde{\mathcal{J}}(z,\nu;\gamma) 			&=	\left. \hat{\mathcal{J}}(z,\nu,\delta)\right|_{\delta = \delta^*}\quad\text{see~\eqref{eq:Jhat}}
			,\\
			\delta^* &=  \signpow{ \frac{\tilde a (1+d)}{2 \alpha_1 \gamma^2}  \left(- z_1 + (1 + \beta)\signpow{z_2}^{1-d} \right) }^{\frac{1+d}{1-d}}.
		\end{align*}
	\end{prop}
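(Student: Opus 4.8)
The plan is to exploit that the left-hand side $\hat{\mathcal J}$ of the dissipation inequality~\eqref{eq:Jhat} is affine in the disturbance $\delta$ up to a strictly convex penalty, so that the worst-case $\delta$ is available in closed form. Collecting the $\delta$-dependent terms of~\eqref{eq:Jhat} (writing $\gamma$ for $\hat\gamma$) gives $\hat{\mathcal J}(z,\nu,\delta)=\frac{\tilde a}{L^2\alpha_1}\bigl(-z_1+(1+\beta)\signpow{z_2}^{1-d}\bigr)\delta-\bigl(\tfrac{\gamma}{L}\bigr)^2|\delta|^{\frac{2}{1+d}}+C(z,\nu)$, where $C$ gathers all terms independent of $\delta$. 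Since $d\in(-1,1)$, the exponent $q:=\tfrac{2}{1+d}>1$, so $|\delta|^{q}$ is strictly convex and $C^1$ with $\tfrac{\mathrm d}{\mathrm d\delta}|\delta|^{q}=q\signpow{\delta}^{q-1}$; hence $\delta\mapsto\hat{\mathcal J}(z,\nu,\delta)$ is strictly concave and its unique global maximiser is the stationary point. Solving $\tfrac{\partial}{\partial\delta}\hat{\mathcal J}=0$ and using $q-1=\tfrac{1-d}{1+d}$ gives $\signpow{\delta}^{\frac{1-d}{1+d}}=\frac{\tilde a(1+d)}{2\alpha_1\gamma^2}\bigl(-z_1+(1+\beta)\signpow{z_2}^{1-d}\bigr)$, i.e. exactly $\delta=\delta^{*}$; strict concavity guarantees this is the maximiser, not merely a critical point, and $\delta^{*}(z,\nu)$ is continuous.

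Next I would reduce the hDDI~\eqref{eq:DI} to a maximisation over a compact set. Since $\delta^{*}$ maximises $\hat{\mathcal J}(z,\nu,\cdot)$, one has $\sup_{\delta}\hat{\mathcal J}(z,\nu,\delta)=\hat{\mathcal J}(z,\nu,\delta^{*})=:\tilde{\mathcal J}(z,\nu;\gamma)$, so ``$\hat{\mathcal J}(z,\nu,\delta)<0$ for all $\delta$'' is equivalent to ``$\tilde{\mathcal J}(z,\nu;\gamma)<0$''. A weight count shows $-z_1+(1+\beta)\signpow{z_2}^{1-d}$ is $r_z$-homogeneous of degree $1-d$, hence $\delta^{*}$ is homogeneous of degree $(1-d)\cdot\tfrac{1+d}{1-d}=1+d=r_\delta$; substituting $\delta=\delta^{*}$ therefore preserves homogeneity, and $\tilde{\mathcal J}(\cdot\,;\gamma)$ is $(r_z,r_\nu)$-homogeneous of degree $d_{\mathcal J}=2$ in $(z,\nu)$. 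Being also continuous, it attains its maximum on the compact homogeneous unit sphere $\qhnorm{(z,\nu)}{(r_z,r_\nu)}{2}{}=1$, and $\tilde{\mathcal J}(z,\nu;\gamma)<0$ on $\R^3\setminus\{0\}$ holds if and only if $\max_{\qhnorm{(z,\nu)}{(r_z,r_\nu)}{2}{}=1}\tilde{\mathcal J}(z,\nu;\gamma)<0$ — the condition in the statement. As in the proof of Theorem~\ref{thm:SF}, the origin is excluded, where no strict inequality can hold.

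It remains to see that the minimising $\gamma$ is well posed. By the envelope principle, $\tfrac{\partial}{\partial\gamma}\tilde{\mathcal J}(z,\nu;\gamma)=\tfrac{\partial}{\partial\gamma}\hat{\mathcal J}(z,\nu,\delta;\gamma)\big|_{\delta=\delta^{*}}=-\tfrac{2\gamma}{L^{2}}\bigl(|\nu|^{\frac{2}{1-d}}+|\delta^{*}|^{\frac{2}{1+d}}\bigr)\le 0$, with equality only where $\nu=0$ and $\delta^{*}=0$; at such points $\tilde{\mathcal J}$ reduces to the zero-input residual $-\tilde a[\alpha_1\mu-\tfrac{\alpha_2}{\alpha_1}\eta]+\alpha_1^{2}|z_2|^{2}$, already strictly negative on the sphere under Assumptions~\ref{ass:gains}–\ref{ass:SF} (Theorem~\ref{thm:SF}). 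Hence $\gamma\mapsto\max_{\text{sphere}}\tilde{\mathcal J}(z,\nu;\gamma)$ is continuous and non-increasing, strictly decreasing whenever it is nonnegative, and by Theorem~\ref{thm:SF} negative for $\gamma$ large; its smallest root is the claimed $\hat\gamma(V_\mathrm h)$, and for every $\gamma\ge\hat\gamma(V_\mathrm h)$ the hDDI~\eqref{eq:DI} holds, so $V_\mathrm h$ certifies $\lph[2]$-gain $\le\hat\gamma(V_\mathrm h)$.

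The main obstacle is the first step: establishing that $\delta^{*}$ is the \emph{global} maximiser of $\hat{\mathcal J}(z,\nu,\cdot)$ despite the non-smoothness of $\signpow{\cdot}^{2/(1+d)}$ at the origin — which must be argued from strict convexity of $|\delta|^{q}$, $q>1$, rather than a second-order test — and then threading the homogeneity-plus-continuity argument so that the supremum of $\tilde{\mathcal J}$ over the non-compact $(z,\nu)$-space is realised as a maximum over the compact unit sphere. The monotonicity in $\gamma$ is then routine bookkeeping.
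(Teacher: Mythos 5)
Your proposal is correct and follows essentially the same route as the paper: eliminate $\delta$ by exploiting concavity of $\hat{\mathcal J}$ in $\delta$ (the paper via the sign of $\partial^2\hat{\mathcal J}/\partial\delta^2$ for $\delta\neq 0$, you via strict convexity of $|\delta|^{2/(1+d)}$) to obtain the stationary maximiser $\delta^*$, and then use homogeneity of degree $2$ of $\tilde{\mathcal J}$ to restrict the negativity check to the compact homogeneous unit sphere. Your added remarks on the homogeneity degree of $\delta^*$, the monotonicity of $\max\tilde{\mathcal J}$ in $\gamma$, and the existence of a feasible $\gamma$ via Theorem~\ref{thm:SF} only make explicit bookkeeping the paper leaves implicit.
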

	
	\begin{proof}
	In order to find the smallest $\gamma$ such that dissipation inequality~\eqref{eq:DI} holds for the given storage function~$V_\mathrm{h}$, we observe that the variable $\delta$ can be eliminated from $\hat{\mathcal{J}}$ by finding the maximum of $\hat{\mathcal{J}}$ w.r.t. $\delta$. Since
	\begin{multline*}
		\frac{\partial \hat{\mathcal{J}}(z,\nu,\delta) }{\partial \delta} = \frac{\tilde a}{L^2\alpha_1} \left(- z_1 + (1 + \beta)\signpow{z_2}^{1-d} \right) \\
		- \left(\frac{\gamma}{ L}\right)^2 \frac{2}{1+d} \signpow{\delta}^{\frac{1-d}{1+d}}
	\end{multline*} 
	is continuously differentiable in $\delta$ for $d < 1$ and
	\begin{equation*}
		\frac{\partial^2 \hat{\mathcal{J}}(z,\nu,\delta) }{\partial \delta^2} 
		= -\left(\frac{\gamma}{ L}\right)^2 \frac{2(1-d)}{(1+d)^2} |\delta|^{\frac{-2d}{1+d}} 
		< 0 \quad \forall \delta \neq 0, 
	\end{equation*}
	it achieves a maximum w.r.t. $\delta$ at $\delta^*$ given in Proposition~\ref{prop:calculatioin}.
	
%
	Since $\tilde{ \mathcal{J}}$ is homogeneous, it is sufficient to restrict the search on the homogeneous unit sphere.

	\end{proof}
	
	In the two-dimensional case discussed here, finding a maximum of $\tilde{ \mathcal{J}}$ is a problem in three variables $z_1,z_2,\nu$. By restricting us to the (homogeneous) unit sphere, we gain the reduction of one dimension and a restricted search area from an unbounded to a compact set. Still, we cannot show convexity for $d\neq 0$ and thus require conscientious search on the entire sphere (see e.g. Fig.~\ref{fig:find_gamma} for $d = -0.5$ and a specific parameter set). Analog to the calculation of the $\lp[2]$-gain based on the Hamiltonian matrix in the linear case, a sub-optimal $\hat{\gamma}(V_\mathrm{h})$ can be found using e.g. a bisection algorithm. Instead of checking the location of the Hamiltonian's eigenvalues, we are interested in the sign of $\max \tilde{ \mathcal{J}}$.
	
	\subsection{Minimization of the estimated homogeneous $\lp[2]$-Gain}
	Being able to calculate an estimate $\hat\gamma$ of the homogeneous $\lp[2]$-gain, we make use of gain scaling $L$ to minimize $\hat{\gamma}$. That is, consider the scaled gains~\eqref{eq:gains_scaled} with fixed $\alpha_1,\alpha_2$ but variable scaling $L$.

	\begin{thm}[Existence of optimal gain-scaling $L^*$]\label{thm:minL}
		Under Assumptions~\ref{ass:gains} and~\ref{ass:SF}, there exists an optimal scaling $L^*>0$ that (globally) minimizes the estimate $\hat{\gamma}_{\mathrm{h}2} = \hat{\gamma}(V_\mathrm{h},L)$, 
	\end{thm}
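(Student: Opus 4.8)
The plan is to regard the estimate $\hat\gamma(L):=\hat\gamma(V_\mathrm{h},L)$ of Proposition~\ref{prop:calculatioin} as a scalar function of the scaling $L>0$, to show that it stays finite and continuous on $(0,\infty)$, and that it diverges both as $L\to 0^+$ and as $L\to\infty$; a continuous, positive function on $(0,\infty)$ with those two limits necessarily attains a global minimum at an interior point $L^\ast$. The first step is to expose the $L$-dependence of the object inside the $\arg\min$. Substituting the worst-case disturbance $\delta^\ast$ of Proposition~\ref{prop:calculatioin} into~\eqref{eq:Jhat} and collecting the $\delta^\ast$-terms gives
\begin{equation*}
	\tilde{\mathcal J}(z,\nu;\gamma,L) = P(z,\nu) - \frac{\gamma^2}{L^2}\,|\nu|^{\frac{2}{1-d}} + \frac{C_\delta}{L^2\gamma^{e}}\,\Bigl|{-}z_1+(1+\beta)\signpow{z_2}^{1-d}\Bigr|^{\frac{2}{1-d}},
\end{equation*}
where $e=\tfrac{2(1+d)}{1-d}>0$, the constant $C_\delta>0$ depends only on $\tilde a,\alpha_1,d$, and $P(z,\nu)=\tilde a\bigl[-\alpha_1(\signpow{z_1}^{\frac1{1-d}}-z_2)(\signpow{z_1+\nu}^{\frac1{1-d}}-z_2)-\tfrac{\alpha_2}{\alpha_1}(-z_1+(1+\beta)\signpow{z_2}^{1-d})\signpow{z_1+\nu}^{\frac{1+d}{1-d}}\bigr]+\alpha_1^2|z_2|^2$ carries no $L$. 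All three summands are $r$-homogeneous of degree $2$ in $(z,\nu)$, so the maximisation can be restricted to the homogeneous unit sphere $\mathcal S$ of Proposition~\ref{prop:calculatioin}. Moreover $P(z,0)=\hat{\mathcal J}(z,0,0)=-\tilde a[\alpha_1\mu(z)-\tfrac{\alpha_2}{\alpha_1}\eta(z)]+\alpha_1^2|z_2|^2<0$ for all $z\neq 0$ by the choice of $\tilde a$ in Theorem~\ref{thm:SF}.

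The heart of the proof consists of two lower bounds on $\hat\gamma(L)$ obtained by testing $\tilde{\mathcal J}$ at suitable directions. \emph{(i)~Divergence for $L\to\infty$.} Take $z_2=1$, $z_1=z_1^0>1+\beta$ and $\nu^0=1-z_1^0\neq 0$; then $z_1+\nu^0=1$, the first bracket in $P$ vanishes, and $P(z^0,\nu^0)=\tilde a\tfrac{\alpha_2}{\alpha_1}(z_1^0-1-\beta)+\alpha_1^2>0$. Rescaling $(z^0,\nu^0)$ onto $\mathcal S$ preserves the sign of $P$ and the fact that $\nu^0\neq 0$, so $\tilde{\mathcal J}(z^0,\nu^0;\gamma,L)<0$ forces $\tfrac{\gamma^2}{L^2}|\nu^0|^{\frac{2}{1-d}}>P(z^0,\nu^0)>0$; hence $\hat\gamma(L)\ge c_1 L$ with $c_1>0$. \emph{(ii)~Divergence for $L\to0^+$.} Take $z^1=(1,0)$, $\nu^1=0$; then $|\nu^1|=0$, $\bigl|{-}z_1^1+(1+\beta)\signpow{z_2^1}^{1-d}\bigr|=1$, and $P(z^1,0)=\tfrac{\tilde a}{\alpha_1}(\alpha_2-\alpha_1^2)<0$, where $\alpha_1^2>\alpha_2$ follows from Assumption~\ref{ass:gains} (Condition~\eqref{eq:condition_dVl_negative} at $z_2=0$ gives $\tilde k_1/\tilde k_2=\alpha_1^2/\alpha_2>1$). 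Then $\tilde{\mathcal J}(z^1,0;\gamma,L)<0$ reads $P(z^1,0)+\tfrac{C_\delta}{L^2\gamma^e}<0$, i.e.\ $\gamma^e>C_\delta/(L^2|P(z^1,0)|)$; hence $\hat\gamma(L)\ge c_2 L^{-2/e}$ with $c_2>0$. Both bounds force $\hat\gamma(L)\to\infty$ at the respective end of $(0,\infty)$.

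It remains to record that $\hat\gamma(L)<\infty$ for every $L>0$ (Theorem~\ref{thm:SF} furnishes a valid storage function for each $L$, and Proposition~\ref{prop:calculatioin} then returns a finite estimate) and that $L\mapsto\hat\gamma(L)$ is continuous: $(\gamma,L)\mapsto\max_{\mathcal S}\tilde{\mathcal J}$ is jointly continuous for $\gamma,L>0$ (maximum of a continuous function over a compact set), it is non-increasing in $\gamma$, tends to $+\infty$ as $\gamma\to0^+$ and to a strictly negative value as $\gamma\to\infty$, and its zero in $\gamma$ is unique — a maximiser with $\nu=0$ and $z_1=(1+\beta)\signpow{z_2}^{1-d}$ would give $\max_{\mathcal S}\tilde{\mathcal J}=P(z,0)<0$, contradicting being at the zero — so $\hat\gamma(L)$ varies continuously. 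A continuous map $\hat\gamma:(0,\infty)\to(0,\infty)$ with $\hat\gamma(L)\to\infty$ as $L\to 0^+$ and as $L\to\infty$ attains its infimum on a suitable compact subinterval, which is the claimed global minimiser $L^\ast>0$.

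The step I expect to be the main obstacle is (i): one has to exhibit a direction in which $P$ — the part of the dissipation rate that is \emph{not} helped by the noise weight — is strictly positive while the noise component is nonzero. This is precisely the high-gain phenomenon (measurement noise forces a differentiation error that grows with $L$), and the construction relies on driving $z_1+\nu$ onto the $z_1$-equilibrium manifold so that the damping term $-\tilde a\alpha_1\mu$ drops out and the sign-indefinite coupling term can dominate. A secondary point requiring care is the uniqueness/monotonicity used for continuity of $\hat\gamma(L)$, which is settled by the negativity of $P$ on $\{\nu=0\}$ established above.
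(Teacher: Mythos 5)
Your proposal is correct, and it reaches the same structural conclusion as the paper — the noise channel forces $\hat\gamma$ to grow like $L$ while the disturbance channel forces it to grow like $L^{-\frac{1-d}{1+d}}$ as $L\to 0^+$ (note $2/e=\frac{1-d}{1+d}$, exactly the paper's exponent) — but the mechanics differ and your version is more complete. The paper argues via the two limit cases $\delta\equiv 0$ and $\nu\equiv 0$ separately: in each case the $L$-dependence is absorbed into rescaled quantities ($\hat\gamma=\tilde\gamma L$, resp. $\hat\gamma=\tilde\gamma L^{-\frac{1-d}{1+d}}$ with $\bar\delta=\delta/L^2$), and existence of an optimum is then asserted from the opposing monotone trends, without an explicit continuity/attainment argument and without explicitly transferring the single-channel scalings to the full two-input estimate. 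You instead eliminate $\delta$ via the maximizer $\delta^*$ of Proposition~\ref{prop:calculatioin}, verify that the resulting $\delta$-contribution is $+C_\delta L^{-2}\gamma^{-e}\lvert -z_1+(1+\beta)\signpow{z_2}^{1-d}\rvert^{\frac{2}{1-d}}$ with $C_\delta>0$ (this computation checks out), and then produce explicit test directions: the point with $z_1+\nu$ on the manifold $\signpow{z_1+\nu}^{\frac{1}{1-d}}=z_2$ killing the $\mu$-term so that $P>0$ with $\nu\neq 0$ (giving $\hat\gamma\ge c_1L$), and the point $(z_1,z_2,\nu)=(1,0,0)$ where $P<0$ by Assumption~\ref{ass:gains} / the choice of $\tilde a$ in Theorem~\ref{thm:SF} (giving $\hat\gamma\ge c_2L^{-2/e}$). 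Both evaluations are correct ($\mu(1,0)=\eta(1,0)=1$ indeed yields $\alpha_1^2/\alpha_2>1$ from Condition~\eqref{eq:condition_dVl_negative}, and $P(z,0)<0$ for $z\neq 0$ follows from $\tilde a>M$). Your added finiteness and continuity argument — joint continuity of $(\gamma,L)\mapsto\max_{\mathcal S}\tilde{\mathcal J}$, monotonicity in $\gamma$, and uniqueness of the zero because a maximizer at level zero cannot have $\nu=0$ and $-z_1+(1+\beta)\signpow{z_2}^{1-d}=0$ simultaneously — closes the attainment step that the paper leaves implicit, so your proof actually yields the (globally attained) minimizer claimed in the theorem rather than only the opposing asymptotic trends. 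Like the paper, you do not (and need not) claim uniqueness of $L^*$.
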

	
	\begin{proof}
	To analyze the effect of $L$ on the homogeneous $\lp[2]$-gain, i.e. the minimum value of $\hat \gamma$ such that $\hat{\mathcal{J}}<0$ holds, consider the two limit cases  $\delta\equiv 0$ and $\nu\equiv 0$, respectively.
	
  	In case of no disturbance, i.e. $\delta\equiv 0$, the inequality with $\hat{\mathcal{J}}$ of~\eqref{eq:Jhat} simplifies to
	\begin{multline*}
		\hat{\mathcal{J}}(z,\nu,0) = \tilde{a}  \left[   - \alpha_1 \left(\signpow{z_1}^{\frac{1}{1-d}} - z_2\right) \left(\signpow{z_1+\nu}^{\frac{1}{1-d}} - z_2\right)\right.\\
		\left.+ \left(- z_1 + (1 + \beta)\signpow{z_2}^{1-d} \right) \left(-\frac{\alpha_2 }{\alpha_1}\signpow{z_1+\nu}^{\frac{1+d}{1-d}} \right)\right]\\
		+\alpha_1^2|z_2|^2 - \left(\frac{\hat\gamma}{ L}\right)^2 |\nu|^{\frac{2}{1-d}}  < 0.
	\end{multline*}	
	Observe that the $\lph[2]$-gain estimate is $\hat\gamma = \tilde{\gamma} L$, where $\tilde{\gamma}$ is independent of $L$.
	
	In contrast, with zero noise $\nu\equiv 0$, it reads
	\begin{multline*}
		\hat{\mathcal{J}}(z,0,\delta) = \tilde{a}  \left[   - \alpha_1 \left(\signpow{z_1}^{\frac{1}{1-d}} - z_2\right) \left(\signpow{z_1}^{\frac{1}{1-d}} - z_2\right)\right.\\
		\left.+ \left(- z_1 + (1 + \beta)\signpow{z_2}^{1-d} \right) \left(-\frac{\alpha_2 }{\alpha_1}\signpow{z_1}^{\frac{1+d}{1-d}} + \frac{1}{\alpha_1}\bar\delta\right)\right]\\
		+\alpha_1^2|z_2|^2 - \left(\hat\gamma L^{\frac{1-d}{1+d}}\right)^2\left|\bar\delta\right|^{\frac{2}{1+d}} <0 
	\end{multline*}
	with $\bar{\delta} = \frac{\delta}{L^2}$. Note that $\frac{1-d}{1+d}>0$ for $d<1$. This yields $\hat\gamma = \frac{\tilde{\gamma}}{L^{\frac{1-d}{1+d}}}$ with $L$-independent $\tilde{\gamma}$.

	Thus, if noise $\nu$ is present, increasing $L$ leads to a larger estimate $\hat\gamma$ of the $\lph[2]$-gain and the contrary holds for a nonzero disturbance $\delta$. 
	\end{proof}
	
	We therefore try to find an optimal gain scaling $L^*$ that minimizes the estimated $\lph[2]$-gain from measurement noise $\nu$ and disturbance $\delta$ to the differentiation error $y$ of the homogeneous differentiator.
	To find $L^*$, several strategies are possible that all build upon Proposition~\ref{prop:calculatioin}.
	Implementation-wise easily, a local optimum can be obtained with the bisection algorithm. However, more sophisticated search strategies are possible to reduce the computational effort. 
	Although we show the existence of a global optimum, we do not show uniqueness. Therefore, several local minima are possible in principle. However, observing convexity in Fig.~\ref{fig:gamma_L}, we expect the minimum to be global.

	\subsection{Numerical Evaluation}\label{sec:numerics}
	To illustrate the calculation and minimization of the $\lph[2]$-gain estimate $\hat{\gamma}=\hat{\gamma}_{\mathrm{h}2}$ for error dynamics~\eqref{eq:h1od_z}, we go through the proposed procedure once and provide a simple, yet illustrative simulation example. 
	
	Consider the values of $\alpha_1$, $\alpha_2$ and $\beta$ in Table~\ref{tab:parameters} and find that Assumptions~\ref{ass:gains} and~\ref{ass:SF} are satisfied.\footnote{Since the maximum of functions $g$ and $m$ depend on the homogeneity degree $d$, their exact values are omitted here for the general case. However, Table~\ref{tab:parameters} holds explicit values for the simulation example.}
	For a variety of homogeneity degrees $d\in (-1,0]$ and gain-scalings $L\in[0.3,2]$, a bisection algorithm is used to estimate the respective values~$\hat{\gamma}$ of the upper bound on the homogeneous $\lp[2]$-gain (see Proposition~\ref{prop:calculatioin}). The results are presented in Fig.~\ref{fig:gamma_L}, where the continuous lines correspond to the estimated values and the black dashed line represents the actual $\lp[2]$-gain for the linear case (i.e. the $\mathcal{H}_\infty$-norm). 
	\begin{figure}[ht]
		\centering
		\includegraphics[width=1.0\linewidth]{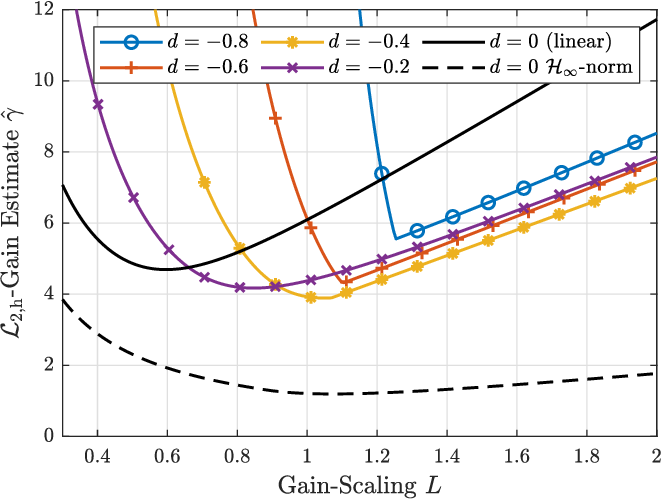}
		\caption{Estimated homogeneous $\lp[2]$-gain with fixed parameters (Tab.~\ref{tab:parameters}) for a variety of scalings $L$ and homogeneity degrees~$d$, where $d=0$ corresponds to the linear case.}
		\label{fig:gamma_L}
	\end{figure}
	A comparison is only possible for the linear case, since the exact $\lph[2]$-gain cannot be calculated for $d\neq 0$, yet. We observe that the estimate~$\hat{\gamma}$ is of the same order and similar shape as the actual $\lp[2]$-gain, i.e. there exists an optimal gain-scaling~$L^*$ which minimizes the effect of noise and disturbance on the differentiation error.  Thus, we expect to get reasonable results for $d\neq 0$ as well. With decreasing homogeneity degree (given the parameters of Table~\ref{tab:parameters}), the estimated $\lph[2]$-gain increases for small values of $L$. This means, that the estimated effect of disturbance~$\delta$ becomes more severe if gains are small (see proof of Theorem~\ref{thm:SF}) and  input-to-state stability would be lost considering the limit case $d\to -1$. On the other hand, the estimated effect of measurement noise does not necessarily increase with decreasing degree $d$. In every case, a minimum w.r.t. $L$ can be observed. 

	Note that these results depend on $V_\mathrm{h}$, i.e. on both $\beta$ and $\tilde a$. Parameter variation results in a shifted optimum in terms of $L$ and $\hat\gamma$. However, the overall shape is not affected.

		\begin{table}[h]
		\caption{Parameters used in the simulation example.}\label{tab:parameters}
		\centering
		\begin{tabular}{l|c|c|c|c|c|c}
			\textbf{Parameter}	 &	$\alpha_1$	&	$\alpha_2$		&$ \beta$ &	$d$	&$M$			&$\tilde{a}$\\\hline
			\textbf{Value}				&	$3$					&	$1.5\sqrt{3}$ 	& $1$			&  $-0.5$			&$15.72$	&$16.72$
		\end{tabular}
	\end{table}
	
	In order to illustrate the results of a $\lph[2]$-gain minimization with respect to $L$, we focus on one exemplary homogeneity degree of $d=-0.5$ and the remaining parameters summarized in Table~\ref{tab:parameters}.
	Indeed, Assumption~\ref{ass:gains} is satisfied, since (see~\eqref{eq:condition_dVl_negative}) 
		$\displaystyle \max_{\qhnorm{z}{r_z}{2}{}=1} g(z_1,z_2) = 1.389 ~<~ \frac{\alpha_1^2}{\alpha_2} = 3.464$, 
	i.e. $V_\mathrm{l}$ is a Lyapunov function for~\eqref{eq:h1od_z} with $u = (\nu,~\delta)^\top\equiv 0$.
	Furthermore, Assumption~\ref{ass:SF} holds with $\tilde{a}>M$ of~\eqref{thm:SF}. Therefore, $V_\mathrm{h}=\tilde{a}LV_\mathrm{l}$ is used to estimate $\hat{\gamma}$. With a bisection algorithm, the optimal gain-scaling of $L^* = 0.975$ is found which leads to the smallest $\lph[2]$-gain estimate of $\hat{\gamma}^* = 3.990$.  
	The strictly negative value function $ \tilde{ \mathcal{J}}(z,\nu;\hat\gamma^*,L^*)$ for $(z,\nu)$ on the homogeneous unit sphere is provided in Fig.~\ref{fig:find_gamma}. It shows to be non-convex and makes conscientious search necessary in order to find $\hat{\gamma}$.
		\begin{figure}[ht]
		\centering
		\includegraphics[width=1.0\linewidth]{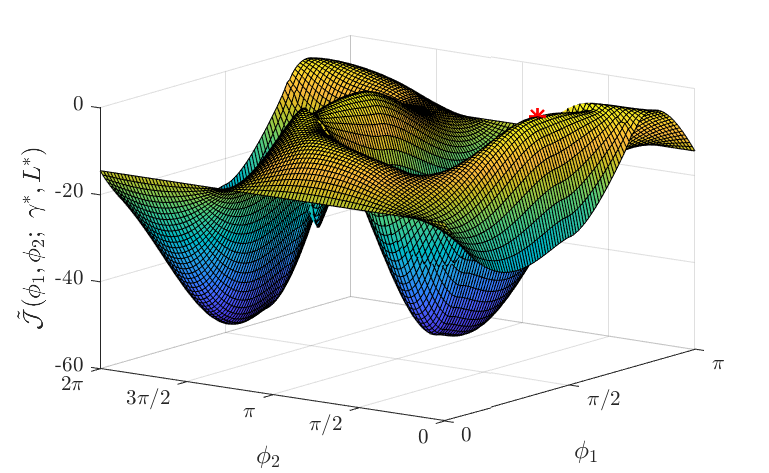}
		\caption{Value function $ \tilde{ \mathcal{J}}(\phi_1,\phi_2;\hat\gamma^*,L^*) =  \left.\tilde{ \mathcal{J}}(z,\nu;\hat\gamma^*,L^*)\right|_{\qhnorm{(z,~\nu)}{(r_z,r_\nu)}{2}{}=1}$ evaluated on the homogeneous unit sphere with the parameters of Table~\ref{tab:parameters} shows to be strictly negative for $\hat{\gamma}^*$ and $L^*$ with a maximum marked by a red asterisk.}
		\label{fig:find_gamma}
	\end{figure}

	We investigate the actual effect of noise and disturbance on the differentiation error by means of an illustrative simulation example. For this purpose,  the noisy signal
	\begin{equation*}
		f_\mathrm{n}(t) = f_0(t) + \nu(t)
				= a_0\sin(\omega_0 t) + a_\nu \sin(\omega_\nu t),
	\end{equation*} 
	is differentiated once,
	where $a_0 = 0.5$, $\omega_0 = 0.5$ and $a_\nu = 0.002$, $\omega_\nu=1000$. From the error dynamics~\eqref{eq:h1od_z}, it can be seen that disturbance~$\delta$ results in $\delta(t) = -a_0 \omega_0^2 \sin(\omega_0 t)$.
	To simulate measurement of the noisy signal, the initial conditions of the error dynamics are chosen as $z_1(0) = 0$ and $z_2(0) = 0.02$ resulting in nonzero initial differentiation error $y(0)=\alpha_1 L\, z_2(0)$.
	We simulate ten periods of the base signal $f_0$, i.e. $t\in[0,T]$ with $T = 10\frac{2\pi}{\omega_0}\approx 125.7\,\mathrm{s}$ and fixed sample-time of $\tau_\mathrm{s} = 1\times 10^{-4}\,\mathrm{s}$ using Euler's method. For clarity reasons, only two periods are depicted in Fig.~\ref{fig:sim_y}.
	\begin{figure}[ht]
		\centering
		\includegraphics[width=1.0\linewidth]{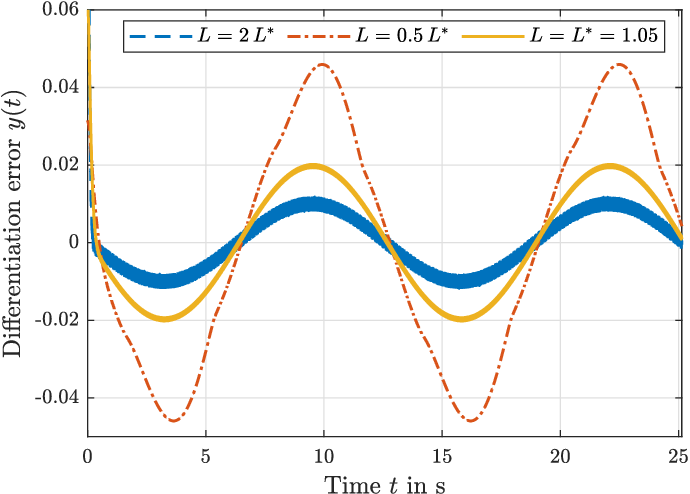}
		\caption{First two periods of differentiation error $y(t)$ for gain-scaling $L\in\{2L^*,0.5L^*,L^*\}$ when differentiating~$f(t)$.}
		\label{fig:sim_y}
	\end{figure}
	
	Apparently, $L=L^*$ yields a reasonable trade-off between the effect of measurement noise $\nu$ and disturbance $\delta$, which is in accordance with the theoretical observations of Theorem~\ref{thm:minL}.
	
	Finally, let us illustrate the logics of the homogeneous $\lp[2]$-gain compared to the standard (local) $\lp[2]$-gain discussed in Section~\ref{sec:motivation} with the present choice of parameters and signals. 
	Consider the specific quotients 
	\begin{equation*}
		\Gamma_\mathrm{T}(y,u) \!=\! \frac{\| y\|_{\mathcal{L}_{2,\mathrm{T}}}}{\| u\|_{\mathcal{L}_{2,\mathrm{T}}}} \!=\! 0.16,~ 
		\Gamma_{\mathrm{h,T}}(y,u) \!=\! \frac{\| y\|_{\mathcal{L}_{2\mathrm{h,T}}}}{\| u\|_{\mathcal{L}_{2\mathrm{h,T}}}} \!=\! 0.92
	\end{equation*}
	for $L=L^*$, where $u = (\nu,~\delta)^\top$ and $\mathcal{L}_{2,\mathrm{T}}$ and $\mathcal{L}_{2\mathrm{h,T}}$ are the respective truncated norms (i.e. $t\in[0,T]$).
	Applying the dilated input $\tilde{u}(\tilde t) = \Delta_\kappa^{r_u}(u(\kappa^{-r_t}t))$ leads to $\tilde{y}(\tilde t)=\Delta_\kappa^{r_y}(y(\kappa^{-r_t}t))$ with $\kappa = 2$ and the weights defined in Section~\ref{sec:h1od_z}. The resulting quotients read
	\begin{equation*}
			\Gamma_\mathrm{T}(\tilde y,\tilde u) =  0.23\neq \Gamma(y,u)_\mathrm{T},~
			\Gamma_{\mathrm{h,T}}(\tilde y,\tilde u) =0.92= \Gamma_{\mathrm{h,T}}(y,u)
	\end{equation*}
	which exemplary shows that the standard $\lp[2]$-gain is not invariant under homogeneous dilation.
	 
	 As derived in the motivation example, considering $\delta\equiv 0$  and $\nu\equiv 0$ leads to the respective quotients
	 \begin{equation*}
	 	\frac{	\Gamma(\tilde y,\tilde \nu) }{\Gamma(y,\nu)} = 
	 	\kappa^{d},\quad\kappa>0
	 	\quad\text{and}\quad 
 		\frac{	\Gamma(\tilde y,\tilde \delta) }{\Gamma(y,\delta)} = 
 		\kappa^{-d}
 		.
	 \end{equation*}
	 With $d = -0.5$ and the parameters of the simulation example, the corresponding graphs are presented in Fig.~\ref{fig:gamma_quotient} and illustrate that only the homogeneous $\lp[2]$-gain is constant under homogeneous dilation, since it is homogeneous of degree zero.
	 This underlines the purpose of utilizing the homogeneous $\lp[2]$-gain for homogeneous systems.
	 	\begin{figure}[ht]
	 	\centering
	 	\includegraphics[width=1.0\linewidth]{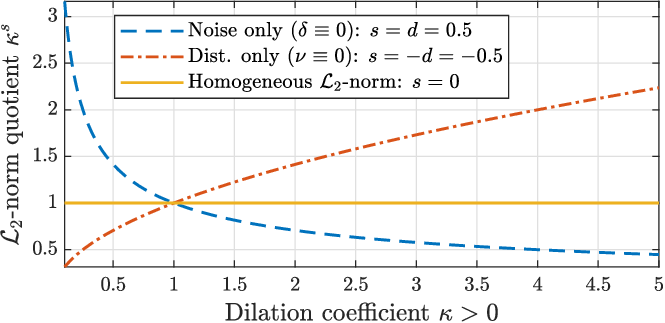}
	 	\caption{Quotients of the $\lp[2]$-norms of the dilated and original input- and output-signals for the special cases $\delta\equiv 0$ and $\nu\equiv 0$, respectively, and quotient of the $\lph[2]$-norms. The parameters are consistent with the preceding example.}
	 	\label{fig:gamma_quotient}
	 \end{figure}

%
	

	\section{Conclusions}
	\label{sec:conclusions}
	We have shown that there exists an optimal gain-scaling $L^*$ which locally minimizes the homogeneous $\lp[2]$-gain estimate of the differentiation error dynamics of dimension $n=2$ and homogeneity degree $d\in(-1,1)$.  
	The estimation is based on dissipativity of the error dynamics with respect to the homogeneous $\lp[2]$ supply rate, where a scaled Lyapunov function is utilized as storage function.  The numerical evaluation for an exemplary parameter set underlines that reasonable results can be achieved with the proposed approach. Furthermore,  an optimization w.r.t. the homogeneity degree $d$ seems possible. 
	Since the estimate depends on the storage function, future work covers a larger family of storage functions, where the parameters $\tilde{a}$, $\beta$ and the remaining degree of freedom are utilized to render the estimate less conservative. Moreover, the generalization towards the arbitrary-order homogeneous differentiator will be considered. Note that the theoretical framework introduced is valid for the general case, but the numerical calculation of the estimated homogeneous gain and the optimization are more complex.
	
		\printbibliography
\end{document}